\documentclass[letterpaper, 10 pt, conference]{ieeeconf}

\IEEEoverridecommandlockouts
\usepackage{cite}
\usepackage{amsmath,amssymb,amsfonts}
\usepackage{algorithmic}
\usepackage{graphicx}
\usepackage{textcomp}
\usepackage{xcolor}
\def\BibTeX{{\rm B\kern-.05em{\sc i\kern-.025em b}\kern-.08em
    T\kern-.1667em\lower.7ex\hbox{E}\kern-.125emX}}
\usepackage{booktabs}

\newcommand{\cN}{{\cal N}}

\newcommand{\cG}{{\cal G}}

\newcommand{\cE}{{\cal E}}
\newcommand{\cI}{{\cal I}}

\newcommand{\cJ}{{\cal J}}

\newcommand{\bY}{{\bf Y}}

\DeclareMathOperator{\re}{Re}
\DeclareMathOperator{\im}{Im}

\usepackage{bm}

\DeclareFontFamily{U}{mathx}{\hyphenchar\font45}
\DeclareFontShape{U}{mathx}{m}{n}{<-> mathx10}{}
\DeclareSymbolFont{mathx}{U}{mathx}{m}{n}
\DeclareMathAccent{\widebar}{0}{mathx}{"73}

\DeclareSymbolFont{bbold}{U}{bbold}{m}{n}
\DeclareSymbolFontAlphabet{\mathbbold}{bbold}

\newtheorem{proposition}{Proposition}

\newtheorem{lemma}{Lemma}


  
\newcommand{\dtoprule}{\specialrule{0.6pt}{0pt}{0.4pt}%
            \specialrule{0.6pt}{0pt}{\belowrulesep}%
            }
\newcommand{\dbottomrule}{\specialrule{0.6pt}{0pt}{0.4pt}%
            \specialrule{0.6pt}{0pt}{\belowrulesep}%
            }
  
\begin{document}

\title{A Generalized LinDistFlow Model for Power Flow Analysis}
\author{Jianqiao Huang, Bai Cui, Xinyang Zhou, and Andrey Bernstein
\thanks{This work was authored in part by the National Renewable Energy Laboratory, operated by Alliance for Sustainable Energy, LLC, for the U.S. Department of Energy (DOE) under Contract No. DE-AC36-08GO28308. This work was supported by the Laboratory Directed Research and Development Program at the National Renewable Energy Laboratory. The views expressed in the article do not necessarily represent the views of the DOE or the U.S. Government. The U.S. Government retains and the publisher, by accepting the article for publication, acknowledges that the U.S. Government retains a nonexclusive, paid-up, irrevocable, worldwide license to publish or reproduce the published form of this work, or allow others to do so, for U.S. Government purposes.}
\thanks{J. Huang is with the Department of Electrical and Computer Engineering, Illinois Institute of Technology, Chicago, USA (email: jhuang54@hawk.iit.edu).}
\thanks{B. Cui, X. Zhou, and A. Bernstein are with Power System Engineering Center, National Renewable Energy Laborotary, Golden, CO (emails: \{bai.cui, xinyang.zhou, andrey.bernstein\}@nrel.gov).}
}

\maketitle

\begin{abstract}
This paper proposes a new linear power flow model for distribution system with accurate voltage magnitude estimates. The new model can be seen as a generalization of LinDistFlow model to multiphase distribution system with generic network topology (radial or meshed) around arbitrary linearization point. We have shown that the approximation quality of the proposed model strictly dominates that of the fixed-point linearization (FPL) method, a popular linear power flow model for distribution system analysis, when both are linearized around zero injection point. Numerical examples using standard IEEE test feeders are provided to illustrate the effectiveness of the proposed model as well as the improvement in accuracy over existing methods when linearized around non-zero injection points.
\end{abstract}

\section{Introduction}

Power flow analysis is ubiquitous in power system planning, operation and control. The power flow equations are a set of nonlinear equations relating real and reactive power injections to voltages phasors. These nonlinear equations pose
significant computational challenges for real-time power flow analysis. Moreover, they result in non-convexity of optimization problems, like optimal power flow (OPF), state estimation (SE). Due to the non-convexity, there is no guarantee for an algorithm to converge to the global optimal solution.

To cope with the challenge, linear power flow (LPF) models were proposed. Instead of solving nonconvex algebraic equations, voltages can be obtained with a single set of linear equations for power flow analysis.  Subsequently, OPF with LPF models was proposed. The problem can be solved very efficiently with guaranteed convergence. The solution quality of OPF problem depends on the accuracy of the linear model \cite{ZYang2018}. Moreover, due to superior computational efficiency, accurate linear models have been adopted in various power system studies including online control strategies, probabilistic power flow, contingency analysis, reliability evaluation, placement and sizing of inverter-based distributed energy resources (DER) and hosting capacity.


There is a rich literature on LPF models in distribution system. The linearized distribution flow (LinDistFlow) model proposed in \cite{M.E.Baran1989} is arguably the most widely used linear power flow model for distribution system analysis. Compared to transmission system LPF models, LinDistFlow yields better voltage magnitude estimate under radial topology and high r/x ratio \cite{S.C.Tripathy1982,JR2013,K.Purchala2005}. The model approximates squared nodal voltage magnitudes as a linear function of approximate line flow by dropping the quadratic terms relating to branch losses in the nonlinear power flow equations of a single phase radial network. Exploiting the radiality of distribution networks, the equivalent model in \cite{farivar2013equilibrium} is derived, which establishes an approximate linear relationship between squared nodal voltage magnitudes and nodal power injections. Subsequently, LinDistFlow model is extended to radial multiphase distribution systems \cite{LGan2016gradientOPF, V.Kekatos2016, B.A.Robbins2016,R.R.Jha2020,D.Arnold2016}. 
Reference \cite{LGan2016gradientOPF} proposes a linear model by ignoring line losses with the assumption that three phase voltages are balanced. The work in \cite{B.A.Robbins2016,R.R.Jha2020} extends the model in \cite{LGan2016gradientOPF} to recover line losses by linearizing square magnitude of line power flow and line current respectively as a function of line flow around a given operating point. These models are used in designing online OPF solvers or voltage regulation algorithms. 

With the increasing popularity of microgrids and modern protection designs, the future distribution system can be expected to go beyond radial operation paradigm. However, there are no extensions of LinDistFlow model to meshed network to the best of our knowledge. Specifically, there are no explicit approximate linear relationship between squared nodal voltage magnitudes and nodal power injections for networks with general topology which reduces to LinDistFlow model when the network is radial. We try to fill this gap in the present paper.

There are extensive literature on LPF other than LinDistFlow. With regression and optimization techniques, \cite{JR2013, Ahmadi2016ALP, E.Schweitzer2020, S.Misra2018, Mhlpfordt2019OptimalAP} obtain LPF models over a range of operating points. For example, \cite{JR2013} fits load models over a predefined set of operating points to reformulate power flow equations as a LPF. The work in \cite{A.Garces2016, A.B2018} propose LPF models based on first order Taylor (FOT) approach. In addition, \cite{A.B2018} proposes a fixed-point linearization (FPL) method, and numerically shows that FOT method is a better local linear approximator, but FPL method provides a better global approximation. Compared with FOT method, the approximation error of FPL method increases slowly when the exact operating point is far away from the given power flow solution.
Although FPL method has a better global behavior, the approximation error of the FPL method still increases quickly when the exact operating point is moving far away from the linearized point. 


To have a more versatile LPF model with low computational complexity while preserving the merits of LinDistFlow, we propose a new LPF model, which we call generalized LinDistFlow model (GLDF). As the name suggests, GLDF can be seen as a generalization of LinDistFlow model. The contributions of this paper are summarized as follows. First, we show the proposed GLDF model generalizes the LinDistFlow model to multiphase, generic network with meshed or radial topology and arbitrary linearization point. In particular, for a radial network, the proposed model linearized around zero-injection point coincides with single phase and multiphase LinDistFlow model in \cite{M.E.Baran1989,LGan2016gradientOPF}.
Second, we show that the proposed model achieves more accurate voltage estimate than FPL method when they are linearized around zero injection point. When they are linearized around other points, numerical evaluations show that the proposed model achieves a better global behavior than FPL method for a radial distribution system.


\section{System Model}\label{sec: LPF model}
%

\subsection{Notations}
In this paper, we use bold letters to represent matrices, e.g., $\mathbf{A}$, italic bold letters to represent vectors, e.g, $\bm{A}$ and $\bm{a}$, and non-bold letters to represent {scalars}, e.g., $A$ and $a$. For matrix $\mathbf{A}$, $\mathbf{A}^{\top}$, $\widebar{\mathbf{A}}$, $\mathbf{A}^{H}$, $\mathbf{A}^{-1}$ denote its transpose, conjugate, conjugate transpose, and inverse (only for square matrix), respectively. $\mathbf{diag}(\bm{a})$ denotes a diagonal matrix with diagonal $\bm{a}$. $\mathbf{A}_i$ denotes the $i$th column vector of matrix $\mathbf{A}$; $\mathbf{A}_{ij}$ denotes the element that sits in $i$th row and $j$th column of matrix $\mathbf{A}$. When $i$, $j$ are index sets, $\mathbf{A} (i, j)$ is a submatrix by taking the rows in the set $i$ and columns in the set $j$ of $\mathbf{A}$. $\bm{A}_i$ denotes the $i$th element of vector $\bm{A}$. $\mathfrak{i}:=\sqrt{-1}$ is used as the imaginary unit. $\re(\cdot)$ and $\im(\cdot)$ denote the real and imaginary parts of a complex number, vector or a matrix. $|\cdot|$ denotes the cardinality of a set.

\subsection{Distribution System}
We consider a multiphase distribution system with a generic topology, which can be radial or meshed, denote by a graph $\cG=\{\{0\}\cup\cN, \cE\}$, where $\{0\}$, $\cN=\{1, 2, ..., N\}$, $\cE = \{ (i, j) \subseteq \cN \times \cN \}$ denote the set of slack bus, the set of PQ buses and the set of lines between buses, respectively. We use $\cN_{i}=\{j \mid (i,j) \subseteq \cE \}$ to denote the set of adjacent buses of bus~$i$. Set~$\Phi_i$ denotes the available phases of bus~$i$, e.g., we set $\Phi_i=\{a,b,c\}$ for a three-phase bus~$i$. We define each phase of a bus as a node, and define the index set of all the nodes as $\cI:= \{1, ..., n \}$, with the total number of nodes in the system $n=\sum_{i\in\cN} |\Phi_i|$.
Define $\cI_j = \{ i \mid \mathrm{bus}[i] = j \}, j\in\cN$ as the node index set of bus~$j$, where $\mathrm{bus}[i]$ denotes the bus to which node $i$ belongs. Let $\cI_j^{\phi}$, for $\phi \in \Phi_j, j\in\cN$ be the index of bus~$j$ phase~$\phi$.

Similarly, a line may have up to three available phases, each defined as a branch. Let $(j,i) \subseteq \cE$ denote line $i$, where node $j$ is the parent of node $i$ in a radial network. Let $\Omega_i$ denote the available phases of line~$i\in\cE$, e.g., $\Omega_i=\{a,b,c\}$ for a three-phase line~$i$ implies that line~$i$ has three phases and three branches. We define the index set of all the branches $\cJ:= \{1, ..., m \}$, where $m=\sum_{i\in\cE} |\Omega_i|$. 
Let $\cJ_j = \{ i\mid \mathrm{line}[i] = j \}$ be the index set of line $j$, where $\mathrm{line}[i]$ denotes the line to which branch $i$ belongs. Denote by $\bm{y}_k\in\mathbb{C}^{|\Omega_k|\times |\Omega_k|}$ the admittance matrix for line $k$ . If the system is a radial system, $\cG$ becomes a tree. Path is a sequence of edges which joins a sequence of distinct vertices. Denote by $\cE_i \subseteq \cE$ the set of lines forming the unique path from slack bus to bus~$i$. 

\subsection{Power Flow Model}
The voltage phasor of the slack bus $\bm{V}_S$ are fixed and given. For bus~$i\in\cN$, let $\bm{V}_i$, $\bm{I}_i$,  $\bm{S}_i=\bm{p}_i+\mathfrak{i}\bm{q}_i$ be its complex voltage, current injection and power injection. The admittance and impedance matrix for the line between bus $i$ and $j$ are denoted by $\bm{y}_{ij}$ and $\bm{z}_{ij}=\bm{r}_{ij}+\mathfrak{i}\bm{x}_{ij}$, respectively. We assume all resistance and reactance are nonnegative. The bus admittance matrix $\bY$ is obtained by:
\begin{align}
\bY(\cI_i, \cI_j)
=\begin{cases}
    \bm{0}, &\text{if } j \neq i, \text{and } j \not \in \cN_i, \nonumber\\
    -\bm{y}_{ij}, &\text{if } j \in  \cN_i, \nonumber\\ 
    \sum_{g \in \cN_i} \bm{y}_{ig}, & i=j. 
\end{cases}
\end{align}
More details of the multiphase distribution system modeling can be found in \cite{Bazrafshan2017}.  

Given the bus admittance matrix $\bY$, the Ohm's and Kirchhoff's laws relate bus currents and voltages as follows:
\begin{equation} \label{eq:IV}
    \begin{bmatrix} \bm{I}_L\\ \bm{I}_S \end{bmatrix} = 
    \begin{bmatrix}
        \bY_{LL} & \bY_{LS}\\
        \bY_{SL} & \bY_{SS}
    \end{bmatrix}
    \begin{bmatrix} \bm{V}_L \\ \bm{V}_S \end{bmatrix},
\end{equation}
where we partition the buses into the slack bus and the PQ buses, which are signified by subscripts $S$ and $L$, respectively. We denote the constant open-circuit voltage as $\bm{E} := -\bY^{-1}_{LL}\bY_{LS}\bm{V}_S$ and the impedance matrix excluding the slack bus as $\mathbf{Z} := \bY^{-1}_{LL}$. For notational simplicity, we drop the subscripts of $\bm{V}_L$ and $\bm{I}_L$ for the PQ buses, and solve the PQ bus voltages as:
\begin{equation} \label{eq:VL1}
    \bm{V} = \bm{E} + \mathbf{Z}\bm{I}.
\end{equation}
Note that $\bm{I} =\mathrm{diag}(\widebar{\bm{V}})^{-1} \widebar{\bm{S}}$, and we can rewrite \eqref{eq:VL1} as:
\begin{equation} \label{eq:VL2}
    \bm{V} = \bm{E} + \mathbf{Z}\cdot \mathrm{diag}(\widebar{\bm{V}})^{-1}\widebar{\bm{S}}.
\end{equation}

\section{Proposed Linear Power Flow Model}
\label{sec:linearization}

In this section, we propose a new linear power flow model to characterize the relationship between the squared voltage magnitudes and the bus power injections, and discuss how the proposed model can be seen as a generalization of the celebrated LinDistFlow model \cite{M.E.Baran1989}.

\subsection{Linear Power Flow Model}
We denote the squared voltage magnitude at node $i\in\cI$ as $v_i$ and its vectorized form as $\bm{v}\in\mathbb{R}^n$. We can write $v_i$ based on \eqref{eq:VL2} as:
\begin{eqnarray} \label{eq:v1}
    && \hspace{-8mm}v_i = |E_i|^2 + 2\re\Big( E_i \sum_{k=1}^n \widebar{Z}_{ik}\frac{S_k}{V_k} \Big) + \sum_{k = 1}^n \sum_{\ell = 1}^n Z_{ik}\widebar{Z}_{i\ell} \frac{\widebar{S}_k S_\ell}{\widebar{V}_k V_\ell}.\nonumber\\
\end{eqnarray}
To derive a linear approximation between the squared voltage magnitudes and bus power injections, we take the two following technical steps for Eq.~\eqref{eq:v1}: 1) we drop the dependence of $v_i$ on $\bm{V}$ in the denominators of the second and third terms, and 2) we linearize the cross products of the power injections and its conjugate in the third term.

Given the open-circuit voltage $\bm{E}$, a linear approximation of \eqref{eq:v1} around a certain operating point $(\bm{S}^*, \bm{V}^*)$ can be obtained as follows. we first substitute $\bm{E}$ for $\bm{V}$ on the RHS of \eqref{eq:v1} to drop the dependence on variable $\bm{V}$. Next, we replace the third term with $\mathbf{\Lambda}$ in \eqref{eq: Lambda_GLDF}, the difference between the squared magnitudes of $\bm{V}^*$ and the linear approximate evaluation of first two terms at $(\bm{S}^*, \bm{V}^*)$. 
Consequently, with some algebraic manipulations, the following vectorized linear power flow approximation of \eqref{eq:v1} at $(\bm{S}^*, \bm{V}^*)$ can be obtained:
\begin{equation} \label{eq:LPF}
    \boxed{
    \bm{v} = \mathrm{diag}(\bar{\bm{E}})\bm{E} + \mathbf{M} \bm{p} + \mathbf{N} \bm{q} + \mathbf{\Lambda},
    }
\end{equation}
where
\begin{subequations}\label{eq:GLDF_coefficient}
\begin{align}
    \mathbf{M} &= \hspace{2.5mm} 2\re\left( \mathrm{diag}(\bm{E})\widebar{\mathbf{Z}} \cdot \mathrm{diag}(\bm{E})^{-1} \right),\\
    \mathbf{N} &= -2\im\left( \mathrm{diag}(\bm{E})\widebar{\mathbf{Z}} \cdot \mathrm{diag}(\bm{E})^{-1} \right),\\
    \bm{\Lambda} &= \mathrm{diag}(\widebar{\bm{V}}^*)\bm{V}^*-(\mathrm{diag}(\bar{\bm{E}})\bm{E} + \mathbf{M} \bm{p^*} + \mathbf{N} \bm{q}^*) \label{eq: Lambda_GLDF}. 
\end{align}
\end{subequations}
Notice that the matrices $\mathbf{M}, \mathbf{N}$ are constant matrices independent of $\bm{S}^*$ or $\bm{V}^*$, while the vector $\bm{\Lambda}$ depends on $(\bm{S}^*, \bm{V}^*)$. By setting the constant term $\mathbf{\Lambda}$ according to \eqref{eq: Lambda_GLDF}, we guarantee that the proposed linearized model achieves accurate modeling at the linearization point $(\bm{S}^*, \bm{V}^*)$.

\subsection{Recovering LinDistFlow Model as a Special Case}\label{sec:LDF}

In ths part, we show that the proposed model coincides with multiphase LinDistFlow model \cite{LGan2016gradientOPF} when linearized around zero-injection point in radial distribution networks. We assume $\mathcal{G}$ is radial for the rest of the section.

\subsubsection{Multiphase LinDistFlow}
LinDistFlow model, as well as its multiphase system extension, has been widely adopted for power flow linearization in radial distribution networks because of its decent accuracy around zero-loading operation points and its meaningful physical interpretation with respect to topology and line impedance. The multiphase LinDistFlow model is given as follows:
\begin{equation} \label{eq:LinDistFlow}
    \bm{v} = \mathrm{diag}(\bar{\bm{E}})\bm{E} + \widehat{\mathbf{R}} \bm{p} + \widehat{\mathbf{X}} \bm{q},    
\end{equation}
where the sensitivity matrices are calculated as:
\begin{subequations}
\begin{eqnarray}
        \widehat{\mathbf{R}} (\cI_i^{\phi}, \cI_j^{\varphi}) &=&  2\hspace{-4mm}\sum_{(h,k) \subseteq \cE_i \cap \cE_j}\hspace{-4mm}{\re\Big(\alpha^{\phi-\varphi} \bar{z}^{\phi \varphi}_{hk}\Big)}, \\
        \widehat{\mathbf{X}} (\cI_i^{\phi}, \cI_j^{\varphi}) &=&\hspace{-2.5mm} -2\hspace{-4mm}\sum_{(h,k) \subseteq \cE_i \cap \cE_j}\hspace{-4mm}{\im\Big(\alpha^{\phi-\varphi} \bar{z}^{\phi \varphi}_{hk}\Big)},
\end{eqnarray}
\end{subequations}
for any $\phi \in \Phi_i$, $\varphi \in \Phi_j$, $i,j \in \cN$, with $\alpha=e^{-\mathfrak{i} 2\pi/3}$. 
We use $a=0$, $b=1$, and $c=2$ to calculate the phase difference $\alpha^{\phi-\varphi}$. $z^{\phi \varphi}_{hk} \in \mathbb{C}$ is the (mutual) impedance of line $(h, k)$ between phase $\phi$ and $\varphi$. 

We now write down the proposed model \eqref{eq:LPF} linearized at zero injection point. We have $\bm{S}^* = \mathbbold{0}$ and $\bm{V}^* = \bm{E}$ at no load condition. It follows that in this case $\bm{\Lambda} = \mathbbold{0}$. So model \eqref{eq:LPF} simplifies to
\begin{equation} \label{eq:LPF0}
    \bm{v} = \mathrm{diag}(\bar{\bm{E}})\bm{E} + \mathbf{M} \bm{p} + \mathbf{N} \bm{q}. 
\end{equation}

To show that \eqref{eq:LinDistFlow} and \eqref{eq:LPF0} coincide, it suffices to show $\widehat{\mathbf{R}}=\mathbf{M}$, $\widehat{\mathbf{X}}=\mathbf{N}$. From \eqref{eq:GLDF_coefficient}, we know that 
\begin{subequations}
\begin{align}
   \mathbf{M} (\cI_i^{\phi}, \cI_j^{\varphi}) &= \hspace{2.5mm}
   2\re \Big(\alpha^{\phi-\varphi} \widebar{\mathbf{Z}} (\cI_i^{\phi}, \cI_j^{\varphi} )\Big),\\ 
   \mathbf{N} (\cI_i^{\phi}, \cI_j^{\varphi}) &= 
   -2\im\Big(\alpha^{\phi-\varphi} \widebar{\mathbf{Z}} (\cI_i^{\phi}, \cI_j^{\varphi} )\Big). 
\end{align}
\end{subequations}
Therefore, we only need to prove that:
\begin{align}
    \mathbf{Z} (\cI_i, \cI_j)=\hspace{-4mm}\sum_{(h,k) \subseteq \cE_i \cap \cE_j}\hspace{-4mm}{\bm{z}_{hk}}, \label{eq:commonpath}
\end{align}
for any $i, j \in \cN$. 

\subsubsection{Reduced Oriented Incidence Matrix}
To this end, we first define a multiphase line-to-bus reduced oriented incidence matrix $\mathbf{A}$. The complete oriented incidence matrix encodes the information on connectivity and line orientation. The reduced one is obtained by eliminating the columns corresponding to slack bus. More precisely, the block $\mathbf{A} (\cJ_k, \cI_i)$ is a $|\cJ_k|$-by-$|\cI_i|$ matrix defined as:
\begin{eqnarray}\label{eq:defA}
\mathbf{A} (\cJ_k^{\varphi}, \cI_i^{\phi})
=\begin{cases}
    0, &\text{if } \phi \neq \varphi \text{ or line } k \text{ is not incident } \\
    &\text{to bus } i;\\
    -1, &\text{if } \phi=\varphi \text{ and line } k \text{ is directed} \\ 
    &\text{  toward bus } i,\\
    1, &\text{if } \phi=\varphi \text{ and line } k \text{ is directed} \\ 
    &\text{  out of bus } i,
\end{cases}\hspace{-3mm}
\end{eqnarray}
where $\varphi \in \Omega_k$, $\phi \in \Phi_i$. 

For example, in a 3-bus system where $\cE=\{ (0, 1), (1,2)\}$, Bus~$0$ is the slack bus, $(0, 1)$, $(1, 2)$ denote line~$1$ and line~$2$, respectively, $\Phi_1 = \{a, b, c \}$, $\Phi_2 = \{ a, c\}$ and lines are directed toward the slack bus. The submatrices of the reduced incident matrix are,
\begin{subequations}
\begin{align}
    \mathbf{A} (\cJ_1, \cI_1) &= 
    \begin{bmatrix}
        1 & 0 & 0\\
        0 & 1 & 0\\
        0 & 0 & 1
    \end{bmatrix}, \quad 
    \mathbf{A} (\cJ_1, \cI_2) = 
    \begin{bmatrix}
        0 & 0 \\
        0 & 0 \\
        0 & 0
    \end{bmatrix},\nonumber\\
    \mathbf{A} (\cJ_2, \cI_1) &=-
    \begin{bmatrix}
        1 & 0 & 0\\
        0 & 0& 1
    \end{bmatrix}, \quad
    \mathbf{A} (\cJ_2, \cI_2) =
    \begin{bmatrix}
        1 & 0\\
        0 & 1
    \end{bmatrix}.\nonumber
\end{align}
\end{subequations}

We next present the form of the inverse of matrix $\mathbf{A}$.

\begin{lemma}\label{lem:inverseA}
    The inverse of matrix $\mathbf{A}$ is characterized as:
    \begin{align}\label{eq:Ainverse}
    \mathbf{A}^{-1} (\cI_i^{\phi}, \cJ_k^{\varphi})
    =\begin{cases}
         0, &\text{if } \phi \neq \varphi \text{ or line } k \not \in \cE_i;\\
        -1, \hspace{-2mm}&\text{if } \phi=\varphi \cap \{ \text{line } $k$ \text{ is directed against}\\ 
        &\text{  path from bus } $i$ \text{ to slack bus} \};\\
        1, &\text{if } \phi=\varphi \cap \{ \text{line } $k$ \text{ is directed along}\\
        &\text{  path from bus } $i$ \text{ to slack bus} \}.
    \end{cases}
    \end{align}
\end{lemma}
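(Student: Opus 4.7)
The plan is to verify directly that the matrix $\mathbf{B}$ defined by the right-hand side of \eqref{eq:Ainverse} satisfies $\mathbf{A}\mathbf{B}=\mathbf{I}$. Since $\mathbf{A}$ is square (under the standard radial convention that each line's phase set coincides with that of its child bus, so $m=n$) and each phase block of $\mathbf{A}$ is, after permutation, lower-triangular with unit diagonal, $\mathbf{A}$ is invertible; uniqueness of the inverse then yields $\mathbf{B}=\mathbf{A}^{-1}$.

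To compute the block entry, I would write
\begin{equation*}
(\mathbf{A}\mathbf{B})(\cJ_k^{\varphi},\cJ_\ell^{\psi}) \;=\; \sum_{i\in\cN}\sum_{\phi\in\Phi_i} \mathbf{A}(\cJ_k^{\varphi},\cI_i^{\phi})\,\mathbf{B}(\cI_i^{\phi},\cJ_\ell^{\psi}).
\end{equation*}
Both factors enforce phase matching, so the only nonzero contributions come from $\phi=\varphi=\psi$; in particular, off-phase blocks ($\varphi\neq\psi$) vanish, matching the identity. Fix $\varphi=\psi$. The factor $\mathbf{A}(\cJ_k^{\varphi},\cI_i^{\varphi})$ is nonzero only at the two buses $a$ (the child) and $b$ (the parent) incident to line~$k$, with values $+1$ and $-1$ respectively (taking line $k$ to be directed from $a$ to $b$).

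I would then split into two cases. For $k=\ell$: line $k$ lies on $\cE_a$ but not on $\cE_b$, so only the $i=a$ term survives; line $k$ is directed along the path from $a$ to the slack bus, whence $\mathbf{B}(\cI_a^{\varphi},\cJ_k^{\varphi})=1$, giving $(+1)\cdot 1=1$, which is exactly the diagonal entry of $\mathbf{I}$. For $k\neq \ell$: the path $\cE_a$ decomposes as $\{k\}\cup\cE_b$, so $\ell\in\cE_a$ iff $\ell\in\cE_b$, and moreover the orientation of $\ell$ relative to the two paths is identical. Hence $\mathbf{B}(\cI_a^{\varphi},\cJ_\ell^{\varphi})=\mathbf{B}(\cI_b^{\varphi},\cJ_\ell^{\varphi})$, and the contributions from $a$ and $b$ combine as $(+1)\cdot\mathbf{B}+(-1)\cdot\mathbf{B}=0$; if $\ell\notin\cE_b$ as well, both $\mathbf{B}$-entries are already zero. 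Either way the off-diagonal blocks of $\mathbf{A}\mathbf{B}$ vanish, completing the verification.

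The main obstacle is not algebraic difficulty but careful bookkeeping of the orientation convention: one must keep straight which bus is the child versus the parent of line $k$, and check that "directed along the path from $a$ to slack" and "directed along the path from $b$ to slack" agree on every line of $\cE_b$. The crucial geometric observation that unlocks the cancellation in Case~2 is simply that adding the single edge $k$ to extend $\cE_b$ into $\cE_a$ preserves the traversal direction of all other edges in the path.
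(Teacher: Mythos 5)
Your proof is correct, but it takes a genuinely different route from the paper. The paper's proof never multiplies anything out: it conjugates $\mathbf{A}$ by a permutation to expose a per-phase block-diagonal structure $\mathrm{blkdiag}(\widehat{\mathbf{A}}_{aa},\widehat{\mathbf{A}}_{bb},\widehat{\mathbf{A}}_{cc})$, establishes invertibility of each block from the full-rank property of incidence matrices of connected graphs, imports the known closed form of the inverse of a \emph{single-phase} reduced incidence matrix of a tree from the literature, and permutes back. You instead construct the candidate $\mathbf{B}$ from the claimed formula and verify $\mathbf{A}\mathbf{B}=\mathbf{I}$ entrywise, with the identity on the diagonal coming from the child bus of line $k$ (the unique endpoint with $k\in\cE_i$) and the off-diagonal cancellation coming from the observation that $\cE_a=\{k\}\cup\cE_b$ with consistent traversal direction on the shared edges. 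Your argument is self-contained and in effect reproves the single-phase result the paper cites; the paper's argument is shorter because it outsources exactly that step and makes the phase decoupling explicit, which it reuses conceptually elsewhere. Two small points: once $\mathbf{A}\mathbf{B}=\mathbf{I}$ is verified for square $\mathbf{A}$, the separate invertibility argument is redundant (and the diagonal of the triangularized per-phase block is $\pm 1$, not necessarily $+1$, depending on line orientation); and your normalization ``line $k$ directed from child $a$ to parent $b$'' is a harmless WLOG since reversing the orientation flips both the $\mathbf{A}$-entry and the $\mathbf{B}$-entry, but it is worth saying so explicitly, as is the degenerate case where the parent endpoint is the slack bus and its column is absent from the reduced matrix (there $\cE_b=\emptyset$ and the single surviving term still gives the right answer).
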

\begin{proof}
    We first obtain a permuted $\mathbf{A}$ by
    \begin{eqnarray} \label{eq:simtrans}
    \widehat{\mathbf{A}} &=& \mathbf{P} \mathbf{A} \mathbf{P}^{\top}\label{eq:permuteA}
    \end{eqnarray}
    such that 
        $$\widehat{\mathbf{A}}=     
        \begin{bmatrix}
        \widehat{\mathbf{A}}_{aa} & \widehat{\mathbf{A}}_{ab} & \widehat{\mathbf{A}}_{ac}\\
        \widehat{\mathbf{A}}_{ba} & \widehat{\mathbf{A}}_{bb} & \widehat{\mathbf{A}}_{bc}\\ 
        \widehat{\mathbf{A}}_{ca} & \widehat{\mathbf{A}}_{cb} & \widehat{\mathbf{A}}_{cc}\\ 
        \end{bmatrix},$$ 
        where $\widehat{\mathbf{A}}_{\phi\varphi} \in  \mathbb{R}^{|\cE^{\phi}| \times |\cN^{\varphi}|}$ for any $\phi, \varphi \in \{ a, b, c\}$ denotes the reduced incident matrix for phase $\phi$ of edges and phase $\varphi$ of buses, with $\cE^{\phi}$ and $\cN^{\varphi}$ collecting all indices of edges and buses having phase $\phi$ and $\varphi$, respectively. Matrix $\mathbf{P}$ is the permutation matrix.
    
    According to the definition of $\mathbf{A}$ in  \eqref{eq:defA}, the off-diagonal blocks of $\widehat{\mathbf{A}}$ are all zeros, i.e.,
    $\widehat{\mathbf{A}}=\mathrm{blkdiag} (\widehat{\mathbf{A}}_{aa}, \widehat{\mathbf{A}}_{bb}, \widehat{\mathbf{A}}_{cc})$. 
    
    
    Since the similarity transformation \eqref{eq:simtrans} preserves rank of $\mathbf{A}$, $\widehat{\mathbf{A}}$ is invertible as long as $\mathbf{A}$ is invertible. Due to the block diagonal structure, $\widehat{\mathbf{A}}$ is invertible if and only if all diagonal blocks are invertible. Note that each of the diagonal blocks is a reduced oriented incidence matrix of a tree. We know that the incidence matrix of a connected graph is full rank \cite[Thm. 8.3.1]{godsil2001algebraic}, so a reduced incidence matrix is invertible as long as it is a square matrix. It follows that all diagonal blocks of $\widehat{\mathbf{A}}$ are invertible, so both $\widehat{\mathbf{A}}$ and $\mathbf{A}$ are invertible.
    
    By simultaneously multiplying $\mathbf{P}^{-1}$ and $({\mathbf{P}^{\top}})^{-1}$ on both LHS and RHS of \eqref{eq:permuteA} and taking inverse on both sides, we obtain the inverse of $\mathbf{A}$ as:
    \begin{align}
        \mathbf{A}^{-1}=\mathbf{P}^{\top} \widehat{\mathbf{A}}^{-1} \mathbf{P}.
    \end{align}
    Based on the property of permutation matrix,  $\mathbf{P}^{-1}=\mathbf{P}^{\top}$, then,
    \begin{align}
        \mathbf{A}^{-1}= \mathbf{P}^{-1}  \widehat{\mathbf{A}}^{-1} (\mathbf{P}^{\top})^{-1}. \label{eq: inverse_A}
    \end{align} 
    Here, matrix $\widehat{\mathbf{A}}^{-1}$ also has a block diagonal structure, each block of which can be independently calculated according to the inverse of the single-phase reduced incident matrix in \cite{LCPF1} as:
    \begin{align}\label{eq:hatAinverse}
    \widehat{\mathbf{A}}_{\phi\phi}^{-1} (i, k)
    =\begin{cases}
         0, &\text{if line } k \not \in \cE_{i};\\
        -1, \hspace{-2mm}&\text{if line } k \text{ is directed against}\\ 
        &\text{  path from bus } i \text{ to slack bus};\\
        1, &\text{if line } k \text{ is directed along}\\
        &\text{  path from bus } i \text{ to slack bus},
    \end{cases}
    \end{align}
for any $\phi\in\{a,b,c\}$.

Note that the permutation in \eqref{eq: inverse_A} reverses that in \eqref{eq:permuteA}. Therefore, $\mathbf{A}^{-1}$ is $\widehat{\mathbf{A}}^{-1}$ permuted back in the order of the original $\mathbf{A}$, and \eqref{eq:Ainverse} follows.
\end{proof}

\subsubsection{Equivalence Proof} We now show Eq.~\eqref{eq:commonpath} holds by the following proposition.

\begin{proposition}\label{lem:commonpath}
	Given a radial system, for any two buses $i, j \in \cN$, $\bY^{-1}_{LL} (\cI_i, \cI_j)$ is the sum of impedance matrices of lines in the set $\cE_i \cap \cE_j$, i.e., the summarized impedance of the common path of buses $i$ and $j$ leading back to the slack bus.
\end{proposition}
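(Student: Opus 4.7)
The plan is to factorize the reduced bus admittance matrix through the multiphase line-to-bus incidence matrix $\mathbf{A}$ defined in \eqref{eq:defA}, then invert the factorization using Lemma~\ref{lem:inverseA} and read off the $(\cI_i,\cI_j)$ block. First, I would show that $\bY_{LL} = \mathbf{A}^{\top}\mathbf{Y}_d\mathbf{A}$, where $\mathbf{Y}_d:=\mathrm{blkdiag}(\bm{y}_1,\ldots,\bm{y}_{|\cE|})$ stacks the line admittance matrices. This is the standard primitive-admittance factorization coming from KCL: at each bus, the injected current equals the signed sum of line currents incident to it, and each line current equals $\bm{y}_k$ times the signed voltage drop across that line, which is precisely $\mathbf{A}(\cJ_k,\cdot)\bm{V}$ on the PQ-bus voltages. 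The signs and phase matching built into \eqref{eq:defA} deliver exactly this factorization.

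Next, because $\cG$ is radial, the tree structure forces $\sum_{k\in\cE}|\Omega_k|=\sum_{i\in\cN}|\Phi_i|$, so $\mathbf{A}$ is square, and Lemma~\ref{lem:inverseA} guarantees it is invertible. Hence
\begin{equation*}
    \bY_{LL}^{-1} \;=\; \mathbf{A}^{-1}\,\mathbf{Z}_d\,\mathbf{A}^{-\top},\qquad \mathbf{Z}_d:=\mathbf{Y}_d^{-1}=\mathrm{blkdiag}(\bm{z}_1,\ldots,\bm{z}_{|\cE|}).
\end{equation*}
Taking the $(\cI_i,\cI_j)$ block and using the block-diagonality of $\mathbf{Z}_d$ collapses the double sum to a single sum over lines:
\begin{equation*}
    \bY_{LL}^{-1}(\cI_i,\cI_j) \;=\; \sum_{k\in\cE} \mathbf{A}^{-1}(\cI_i,\cJ_k)\,\bm{z}_k\,\bigl(\mathbf{A}^{-1}\bigr)^{\!\top}\!(\cJ_k,\cI_j).
\end{equation*}
By Lemma~\ref{lem:inverseA}, $\mathbf{A}^{-1}(\cI_i,\cJ_k)$ vanishes unless $k\in\cE_i$, and similarly for $j$, so only lines in $\cE_i\cap\cE_j$ survive. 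For surviving $k$, Lemma~\ref{lem:inverseA} says the phase-matched entries are $\pm 1$ diagonal blocks, so the two factors surrounding $\bm{z}_k$ reduce to the per-phase identity restricted to $\Omega_k$, up to a product of two signs.

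The main obstacle, and the point that deserves the most care, is verifying that this product of signs is always $+1$. The key observation is geometric: if $k\in\cE_i\cap\cE_j$, then $k$ lies on the common prefix of the two unique paths from the slack bus out to $i$ and to $j$, so its orientation relative to \emph{both} of those paths is identical (either it is directed along both, or against both). By the case split in \eqref{eq:Ainverse}, the sign attached to $k$ in $\mathbf{A}^{-1}(\cI_i,\cJ_k)$ equals the sign attached to $k$ in $\mathbf{A}^{-1}(\cI_j,\cJ_k)$; hence their product is $+1$. For $k\notin\cE_i\cap\cE_j$ the contribution is zero, and radiality is essential here since in a meshed graph $\mathbf{A}$ is not square and this common-path argument fails. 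Combining the three steps yields $\bY_{LL}^{-1}(\cI_i,\cI_j) = \sum_{(h,k)\subseteq\cE_i\cap\cE_j}\bm{z}_{hk}$, establishing \eqref{eq:commonpath}.
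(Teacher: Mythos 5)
Your proof is correct and follows essentially the same route as the paper: factor $\bY_{LL}=\mathbf{A}^{\top}\mathbf{y}\mathbf{A}$ through the multiphase reduced incidence matrix, invert using Lemma~\ref{lem:inverseA}, and read off the $(\cI_i,\cI_j)$ block as a single sum over lines in $\cE_i\cap\cE_j$. The one place you go beyond the paper's write-up is the explicit verification that the two $\pm1$ signs attached to a common-path line $k$ always multiply to $+1$ (because $k$ has the same orientation relative to both root-to-bus paths); the paper leaves this step implicit, so your version is, if anything, more complete.
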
 
\begin{proof}
For a single phase radial distribution system, Proposition~\ref{lem:commonpath} has been proved in \cite{LCPF1}. Here we prove the multiphase case. We first express $\bY_{LL}$ in terms of the reduced incidence matrix and line admittance matrix as:
\begin{align}
    \bY_{LL}&=\mathbf{A}^{\top}\mathbf{y}\mathbf{A},\label{eq:YLL}
\end{align}
where $\mathbf{y}\in \mathbb{C}^{m \times m}$ is a block diagonal matrix of line admittance and $\mathbf{A}$ is the multiphase line to bus reduced directed incidence matrix defined previously.

Recall that $\mathbf{Z}=\bY^{-1}_{LL}$. Then based on \eqref{eq:YLL}, we have the following:
\begin{align}
    \mathbf{Z} &=\mathbf{A}^{-1} \mathbf{y}^{-1} (\mathbf{A}^{\top})^{-1}\nonumber\\
                  &=\sum_{k\in\cE} (\mathbf{A}^{-1})_{\cJ_k} \bm{z}_k
                  \left((\mathbf{A}^{-1})_{\cJ_k}\right)^{\top},\label{eq:H_1r_exp}
\end{align}
where $\bm{z}_k \in \mathbb{C}^{|\Omega_k| \times |\Omega_k|}$ is the impedance matrix of line~$k$. $(\mathbf{A}^{-1})_{\cJ_k}$ is the $|\Omega_k|$ columns of $\mathbf{A}^{-1}$ corresponding to the $k$th line. Then, we can obtain the following for any buses $i$ and $j$:
\begin{align}
   \mathbf{Z} (\cI_i, \cI_j) =
   \sum_{k\in\cE} \mathbf{A}^{-1} (\cI_i, \cJ_k) \bm{z}_k
                  \left(\mathbf{A}^{-1} (\cI_j, \cJ_k)\right)^{\top},
\end{align}
where $\mathbf{A}^{-1} (\cI_i, \cJ_k)$ is an $|\cI_i|$-by-$|\cJ_k|$ submatrix of $\mathbf{A}^{-1}$ defined by Eq.~\eqref{eq:Ainverse} in Lemma~\ref{lem:inverseA}.

Therefore, $ \mathbf{Z} (\cI_i, \cI_j)$ is the sum of  $|\Phi_{i}|$-by-$|\Phi_{j}|$ impedance matrix of lines in the set $\cE_i \cap \cE_j$, where $i, j \in \cN$. 
\end{proof}

Based on Proposition~\ref{lem:commonpath}, \eqref{eq:LPF0} is equivalent to the multiphase LinDistFlow model \eqref{eq:LinDistFlow} in \cite{LGan2016gradientOPF}. For a single phase distribution system, \eqref{eq:LPF0} is equivalent to the LinDistFlow model in \cite{M.E.Baran1989, farivar2013equilibrium}. Therefore, the proposed LPF model \eqref{eq:LPF} can be seen as a generalization of LinDistFlow.


\section{Error Analysis of the Proposed Model} \label{sec:Error Analysis}



Bus injection model and branch flow model are two equivalent models commonly used for power flow analysis \cite{B.Subhonmesh2012}. Both models have been used to derive linear power flow models. The proposed model \eqref{eq:LPF} falls into the first category since the approximation is based on \eqref{eq:VL2} which involves exclusively nodal variables. On the other hand, LinDistFlow is arguably the most widely used branch flow model-based linear approximation. Having established the connection between the proposed model and LinDistFlow in the previous section, we now investigate how it compares to bus injection-based linear power flow models. 

The fixed-point power flow formulation \eqref{eq:VL2} is popular for distribution system analysis. It is not used as often in transmission system due to its inability to model PV bus. As a result, many bus injection-based linear models tailored for distribution system are based on \eqref{eq:VL2}, such as \cite{A.B2018}, \cite{Bolognani2015}, and \cite{Dhople2015}, as opposed to ones for transmission system that are generally derived from the admittance matrix formulation \eqref{eq:IV}. An example of distribution system linear power flow model based on \eqref{eq:VL2} is the FPL model recently proposed in \cite{A.B2018, A.Bernstein2017_conf}. This model linearizes around an operating point $\bm{V}^*$ by simply replacing $\bm{V}$ in \eqref{eq:VL2} by $\bm{V}^*$, which writes:
\begin{equation} \label{eq:FPL}
    \bm{V} = \bm{E} + \mathbf{Z}\cdot\mathrm{diag}(\widebar{\bm{V}}^*)^{-1}\widebar{\bm{S}}.
\end{equation}
Numerical experiments show the approximation quality of the proposed method \eqref{eq:LPF} and FPL \eqref{eq:FPL} are incomparable when linearized around nominal operation point: FPL appears to be more accurate when loading levels are lower than the linearization point, and the proposed method has higher accuracy otherwise.

However, it can be shown that when both are linearized at zero injection point, the proposed method always yields smaller error than FPL when the underlying network is radial. This is stated in Proposition \ref{thm:LPFvFPL} below.
\begin{proposition} \label{thm:LPFvFPL}
    Consider both linear power flow models \eqref{eq:LPF0} and \eqref{eq:FPL} linearized around zero injection point. For given PQ bus power injections $\bm{S}$, let $\hat{\bm{v}}$ be the approximate squared voltage magnitudes given by \eqref{eq:LPF0} and $\tilde{\bm{v}}$ be the squared magnitude of the approximate voltage phasors given by \eqref{eq:FPL}. Let $\bm{V}$ be an actual power flow solution\footnote{There may be multiple power flow solutions for the given power injection $\bm{S}^*$, the result holds for any of them.} and denote $\bm{v} := \mathrm{diag}(\widebar{\bm{V}})\bm{V}$, then the inequality $\left| \hat{v}_i - v_i \right| \le \left| \tilde{v}_i - v_i \right|$ holds for all $i \in \mathcal{N}$ when the network is radial.
\end{proposition}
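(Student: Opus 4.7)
The plan is to establish the sandwich $v_i \le \hat v_i \le \tilde v_i$ for every PQ bus $i$ in a radial network, from which the conclusion $|\hat v_i - v_i| = \hat v_i - v_i \le \tilde v_i - v_i = |\tilde v_i - v_i|$ follows immediately.

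The upper inequality $\hat v_i \le \tilde v_i$ is an exact algebraic identity. At the zero-injection linearization, $\bm{V}^* = \bm{E}$, so the FPL model \eqref{eq:FPL} gives $\tilde V_i = E_i + B_i$ with $B_i := \sum_k Z_{ik}\bar S_k/\bar E_k$; taking squared magnitude yields $\tilde v_i = |E_i|^2 + 2\re(\bar E_i B_i) + |B_i|^2$. A parallel expansion of \eqref{eq:LPF0} using the definitions of $\mathbf{M}$ and $\mathbf{N}$ in \eqref{eq:GLDF_coefficient} gives $\hat v_i = |E_i|^2 + 2\re(\bar E_i B_i)$. Subtracting, $\tilde v_i - \hat v_i = |B_i|^2 \ge 0$ unconditionally.

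The core work is the lower inequality $v_i \le \hat v_i$. By the equivalence established in Section~\ref{sec:LDF}, $\hat v_i$ is the multiphase LinDistFlow approximation in a radial network, so this reduces to the classical fact that LinDistFlow upper-bounds the true squared voltage magnitudes whenever $r_{ij}, x_{ij} \ge 0$. For the single-phase case I would prove this by induction on tree depth starting from the slack bus (where $\hat v_0 = v_0$). At a non-slack node $j$ with parent $i$, subtracting the LinDistFlow recursion from the DistFlow recursion of \cite{M.E.Baran1989} gives
\begin{equation*}
\hat v_j - v_j = (\hat v_i - v_i) + 2 r_{ij}(P_{ij} - \hat P_{ij}) + 2 x_{ij}(Q_{ij} - \hat Q_{ij}) - |z_{ij}|^2 \ell_{ij}.
\end{equation*}
Because the true sending-end power equals the linearized one plus the sum of all downstream line losses, and losses are nonnegative under $r, x \ge 0$, one has $P_{ij} - \hat P_{ij} \ge r_{ij}\ell_{ij}$ and $Q_{ij} - \hat Q_{ij} \ge x_{ij}\ell_{ij}$. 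Substituting collapses the last three terms into $+|z_{ij}|^2\ell_{ij} \ge 0$, closing the induction.

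The principal obstacle is extending $\hat v_i \ge v_i$ to the multiphase radial case, where the branch flow model becomes matrix-valued and the elementary scalar inequalities above do not transfer verbatim. I plan to address this either by adapting the multiphase branch flow formulation of \cite{LGan2016gradientOPF} and invoking the positive semidefiniteness of the multiphase loss terms in the inductive step, or by staying within the bus-injection fixed-point formulation and using the common-path representation of $\bm{Z}$ from Proposition~\ref{lem:commonpath} to express both $B_i$ and $V_i - E_i$ as sums of per-line contributions along the unique slack-to-$i$ path, then comparing term by term. In either route, the nonnegativity of $\bm{r}_{ij}$ and $\bm{x}_{ij}$ is the structural ingredient that guarantees the loss-induced corrections always pull the true $v_i$ below its linearized estimate.
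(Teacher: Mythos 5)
Your strategy is exactly the paper's: establish the sandwich $v_i \le \hat v_i \le \tilde v_i$ for a radial network and read off the conclusion. Your upper inequality is the paper's computation verbatim, merely written componentwise: the paper expands \eqref{eq:FPL} at $\bm{V}^*=\bm{E}$ and obtains $\tilde{\bm v} = \hat{\bm v} + \mathrm{diag}(\mathbf{Z}\,\mathrm{diag}(\bar{\bm E})^{-1}\widebar{\bm S})^{H}(\mathbf{Z}\,\mathrm{diag}(\bar{\bm E})^{-1}\widebar{\bm S})$, i.e.\ $\tilde v_i - \hat v_i = |B_i|^2$ in your notation; this part of your argument is complete and correct. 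The only divergence is in the lower inequality $v_i \le \hat v_i$. The paper does not prove it: it invokes the equivalence of \eqref{eq:LPF0} with multiphase LinDistFlow from Section~\ref{sec:LDF} and then cites \cite[Lemma 12-4]{SLowConvexRelaxOPF2014_partI}. Your single-phase induction down the tree---using the exact identity that the true sending-end flow equals the lossless flow plus the sum of downstream losses, so $P_{ij}-\hat P_{ij}\ge r_{ij}\ell_{ij}$ and $Q_{ij}-\hat Q_{ij}\ge x_{ij}\ell_{ij}$, collapsing the correction to $+|z_{ij}|^2\ell_{ij}\ge 0$---is precisely the standard proof of that cited lemma. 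So you have re-derived the paper's external ingredient rather than found a different route; what this buys is a self-contained argument for the single-phase case.

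The one genuinely incomplete step is the multiphase lower bound, which you correctly identify as the principal obstacle but do not close. Note, however, that the paper is in the same position: Lemma 12-4 of \cite{SLowConvexRelaxOPF2014_partI} is a single-phase statement, and the paper applies it to the multiphase model without further justification. Of your two proposed repairs, the second (a term-by-term comparison of $B_i$ against $V_i-E_i$ via the common-path representation of $\mathbf{Z}$) is unlikely to succeed as stated: the discrepancy $v_i-\hat v_i$ computed from \eqref{eq:v1} contains $2\re\big(E_i\sum_k \widebar Z_{ik}S_k(1/V_k - 1/E_k)\big)$, whose sign is not controlled phase by phase, so nonnegativity of $\bm r_{ij},\bm x_{ij}$ alone does not finish the argument. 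The first route (multiphase branch flow with positive semidefinite current matrices $\ell_{ij}$) is the more promising one, but the multiphase LinDistFlow of \cite{LGan2016gradientOPF} also drops non-sign-definite terms coming from the balanced-voltage approximation, so the inductive step does not reduce to a single PSD loss term as it does in \cite{M.E.Baran1989}. For the single-phase radial case your proof is complete; for the multiphase case you should either restrict the claim or supply the missing lemma explicitly rather than leaving both routes as sketches.
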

\begin{proof}
    It has been shown in the previous section that \eqref{eq:LPF0} coincides with the multiphase LinDistFlow model. So we know from \cite[Lemma 12-4]{SLowConvexRelaxOPF2014_partI} that $\hat{v}_i \ge v_i$ holds for all $i \in \mathcal{N}$ for a radial network. To prove the proposition it suffices to show $\tilde{v}_i \ge \hat{v}_i$ for all $i \in \mathcal{N}$.
    
    When linearized around zero injection point, $\bm{V}^*=\bm{E}$. Based on \eqref{eq:FPL}, $\tilde{\bm{v}}_i$ can be derived as,
\begin{align}
    \tilde{\bm{v}}
    &=
    \mathrm{diag}(\bar{\bm{E}})\bm{E} + \mathbf{M} \bm{p} + \mathbf{N} \bm{q} \nonumber\\
    &+ \mathrm{diag}(\mathbf{Z}\cdot \mathrm{diag}(\bar{\bm{E}})^{-1} \widebar{\bm{S}})^{H} (\mathbf{Z}\cdot \mathrm{diag}(\bar{\bm{E}})^{-1} \widebar{\bm{S}}) \nonumber\\
    &=
    \hat{\bm{v}}+ \mathrm{diag}(\mathbf{Z}\cdot \mathrm{diag}(\bar{\bm{E}})^{-1} \widebar{\bm{S}})^{H} (\mathbf{Z}\cdot\mathrm{diag}(\bar{\bm{E}})^{-1} \widebar{\bm{S}}).
\end{align}
Hence, $\tilde{v}_i \geq \hat{v}_i$ for all $i \in \cN$.
\end{proof}

\section{Numerical Evaluation}\label{subsec:numerical}
This section presents the numerical results of the proposed linearized power flow (the generalized LinDistFlow, or GLDF) model at various loading conditions, in comparison with the performance of LinDistFlow (LDF) model and the advanced FPL model in \cite{A.B2018} under the same conditions.


\subsection{Simulation Configuration}
We carry out our comparisons in IEEE 13-bus, 37-bus, and 123-bus radial distribution systems. For each system, we set its slack bus's voltage magnitude to 1 p.u. We then denote a reference constant power loading condition by  $\bm{S}^{\mathrm{ref}}$, under which the voltage magnitudes approximately range within [0.938, 1]~p.u., [0.946, 1]~p.u., and [0.908, 1]~p.u. for 13-bus, 37-bus, and 123-bus systems, respectively. We evaluate the approximation errors by continuation analysis and generating random load values. 

\subsection{Continuation Analysis}
We tune the testing loading conditions $\bm{S}=k\bm{S}^{\mathrm{ref}}$ by changing the parameter $k$ from -2.5 to 2.5 with granularity of $0.01$. For each system, we present the average estimation error as a function of system loading levels by each of the three methods (GLDF, LDF, and FPL) linearized at three operation points: zero injection point with $k=0$, a positive rated loading point with $k=1$, and a negative rated loading point with $k=-1$. Note that LDF always has a fixed linearization formula regardless of the loading conditions. We record the linearization errors of GLDF, LDF, and FPL by comparing their linearized voltage magnitudes against the true values from the nonlinear power flow model. Here, we use the metrics of relative errors defined as $\| |\tilde{\bm{V}}|-|\bm{V}|\|_2/\| |\bm{V}|\|_2$ with the vector of true voltage magnitudes $|\bm{V}|$ and the vector of linearized voltage magnitudes $|\tilde{\bm{V}}|$.
\subsubsection{Compare GLDF and LDF}
We have analytically characterized the relationship between the traditional LDF and the proposed GLDF in Section~\ref{sec:LDF}: LDF is a special example of GLDF linearized at zero injection point. This has been validated in the left figures of Fig.~\ref{fig:IEEE13}--\ref{fig:IEEE123}, where we show that the error rates of LDF (green dashed lines) coincide with those of GLDF (blue solid lines) when GLDF is linearized at zero injection point for all testing systems.

However, because LDF is designed to be a fixed linearization method under the assumption of lossless power flow---or often times equivalently, little nodal power injections---its performance declines as the operating points deviate from the zero injection. In such scenarios, we may benefit from using localized linearization method to achieve better accuracy using GLDF or FPL. As shown in the middle (linearized at $k=1$) and right figures (linearized at $k=-1$) of Figs.~\ref{fig:IEEE13}--\ref{fig:IEEE123}, both GLDF and FPL perform much better than LDF around their linearized points at $k=1$ and $k=-1$.

\subsubsection{Compare GLDF and FPL}
Both GLDF and FPL can generate linearization based on centain operating points. When linearized at zero load point, GLDF is equivalent to LDF, and always performs better than FPL; see the left figures of Figs.~\ref{fig:IEEE13}--\ref{fig:IEEE123}. This results also echoes with the error analysis of Proposition~\ref{thm:LPFvFPL}. 

On the other hand, when linearized at $k=1$ (resp. $k=-1$), GLDF generates the same performance as FPL at the linearized points. While FPL has better accuracy from around $k=0$ to $k=1$ (resp. $k=-1$ to $k=0$) because it is an approximation of an exact two-point linearization based on $k=0$ and $k=1$ (resp. $k=-1$), GLDF generates more consistent results beyond $k<0$ and $k>1$ (resp. $k>0$ and $k<-1$).

\subsection{Evaluation of Random Load}
When we generate random load values, we consider a system without and with the penetration of DER. For both cases, we use the metrics of average errors defined as $\frac{1}{nk} \sum_{j=1}^{k} \sum_{i=1}^{n} |(|\tilde{\bm{V}}_i^j|-|\bm{V}_i^j|)|$ and maximum errors defined as $\underset{j} \max  \||\tilde{\bm{V}}^j|-|\bm{V}^j|\|_{\infty}$, where $\tilde{\bm{V}}_i^j$, $\bm{V}_i^j$ denote voltage magnitude at node $i$ for the $j$th sample given by LPF models and exact power flow model, respectively, and $k=10,000$ denotes the number of samples.

On the first case, random load values and power factor follow uniform distribution, denoted by $p_i \sim U(1.5 \re (\bm{S}^{\mathrm{ref}}_i), 0)$, $pf_i \sim U(0.7,1)$, where $U$ denotes uniform distribution, $p_i$, $\bm{S}^{\mathrm{ref}}_i$, $pf_i$ denote real power injection, reference power and power factor at node $i$. We calculate the errors of the three methods linearized at $0.75\bm{S}^{\mathrm{ref}}$ and present the mean and maximum values in TABLE \ref{RatedLoad}.

 For the second case, power factor follows uniform distribution, denoted by $pf_i \sim U(0.7,1)$. Random load values for half of load nodes follow uniform distribution, denoted by $p_i \sim U(1.5 \re (\bm{S}_i), 0)$, where $\bm{S}_i = \bm{S}^{\mathrm{ref}}_i$, while those of the other load nodes follow uniform distribution, denoted by $p_i \sim U(0, 1.5 \re (\bm{S}_i))$, where $\bm{S}_i =- \re (\bm{S}^{\mathrm{ref}}_i) + \mathfrak{i} \im (\bm{S}^{\mathrm{ref}}_i)$. We calculate the errors of the three methods linearized at $0.75\bm{S}$ and summarize them in TABLE \ref{Negative RatedLoad}.

As shown in TABLE \ref{RatedLoad}--\ref{Negative RatedLoad}, the mean and maximum errors of GLDF is smaller than FPL. We get similar results when linearized at other operating points. Therefore, GLDF is more robust to variation of load values than FPL method.

\begin{figure*}
\centering
\includegraphics[width=.32\textwidth]{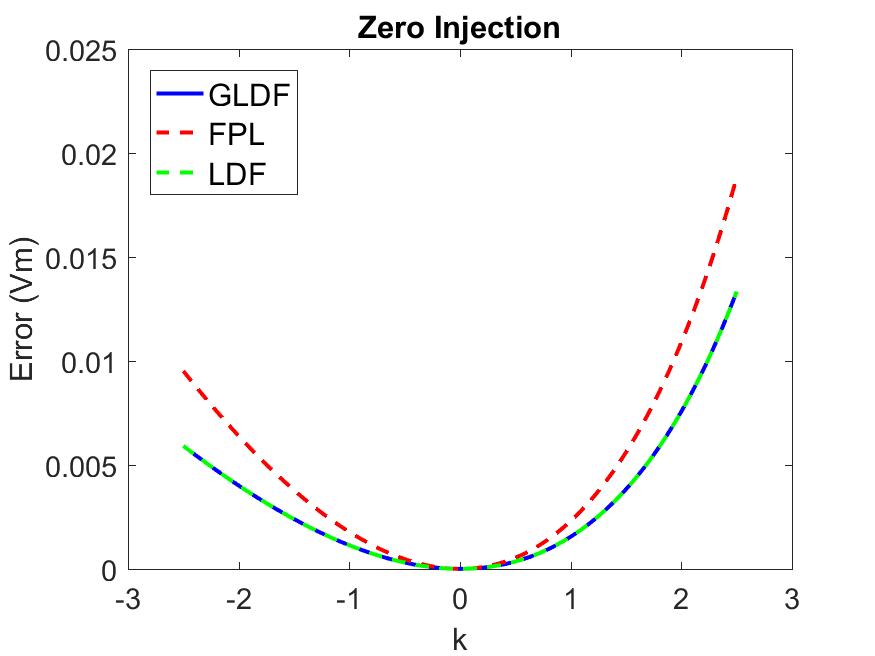}
\includegraphics[width=.32\textwidth]{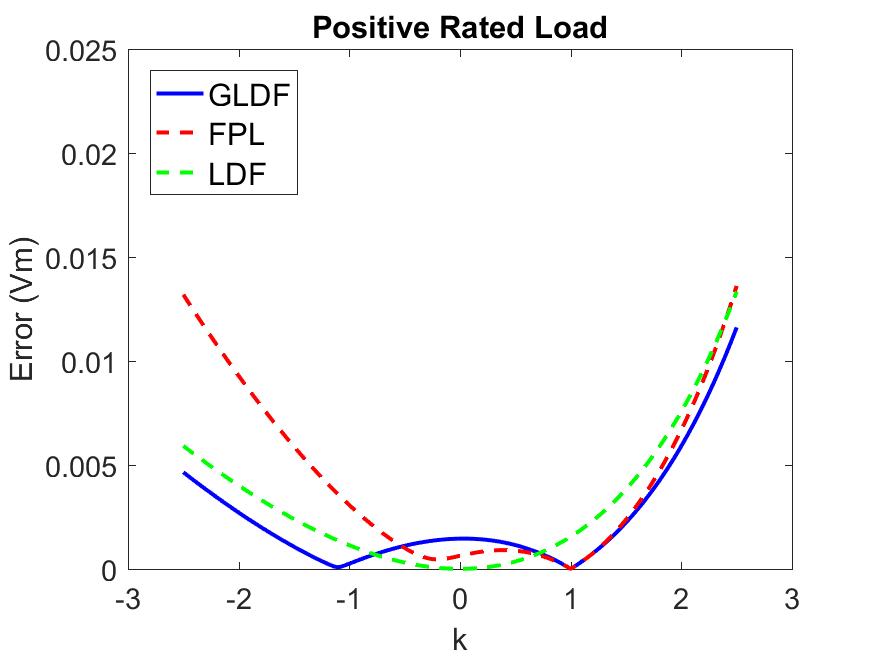}
\includegraphics[width=.32\textwidth]{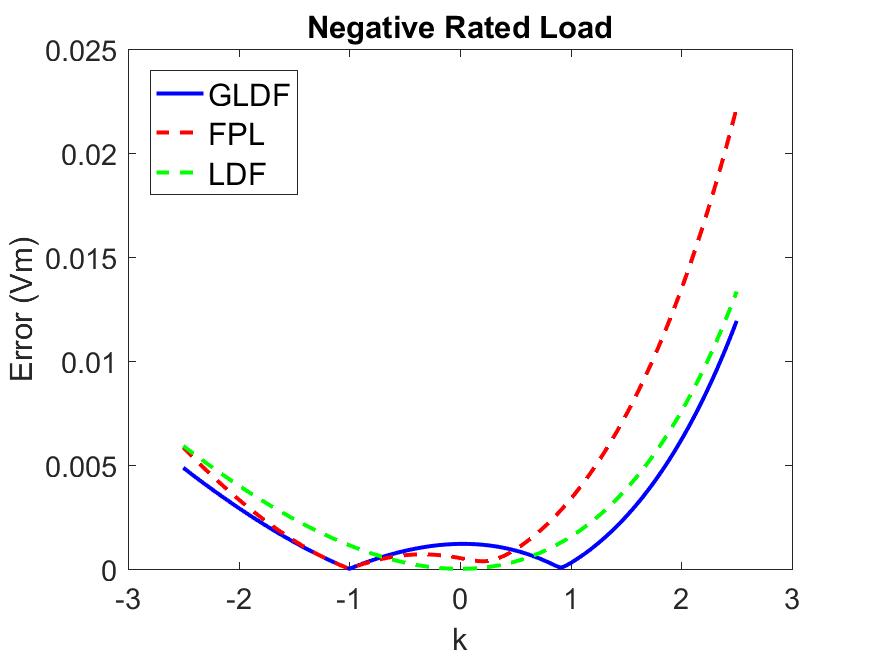} 
\caption{Relative linearization errors of voltage magnitudes in IEEE 13-bus feeder when linearized at (left) zero load, (middle) positive rated load, and (right) negative rated load.}
\label{fig:IEEE13}
\end{figure*}

\begin{figure*}
\centering
\includegraphics[width=.32\textwidth]{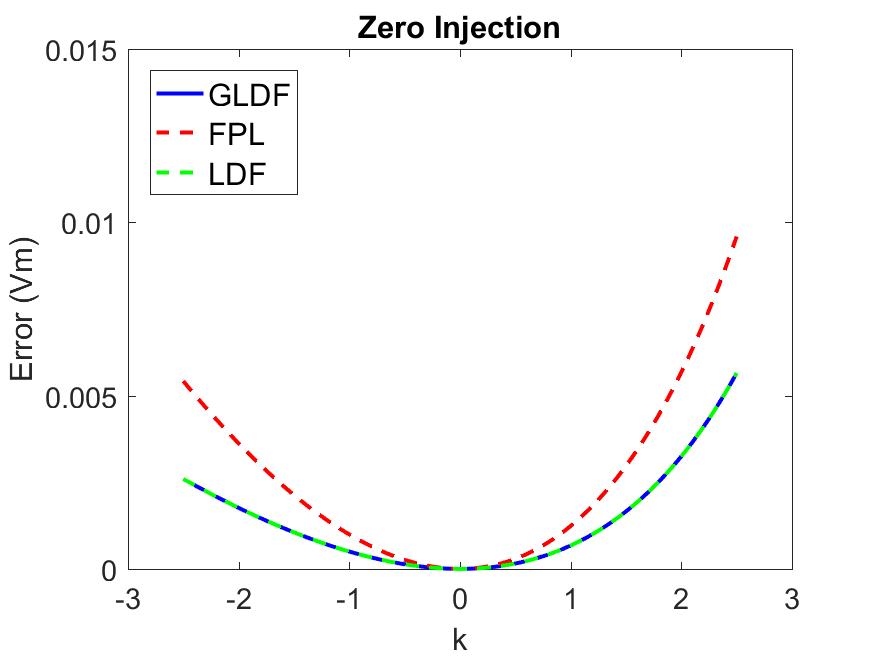}
\includegraphics[width=.32\textwidth]{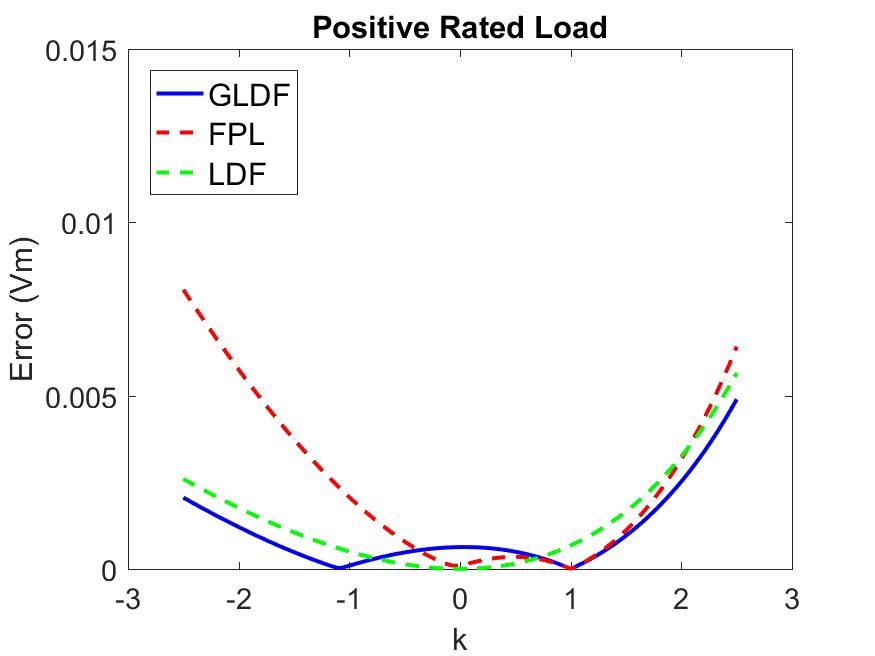}
\includegraphics[width=.32\textwidth]{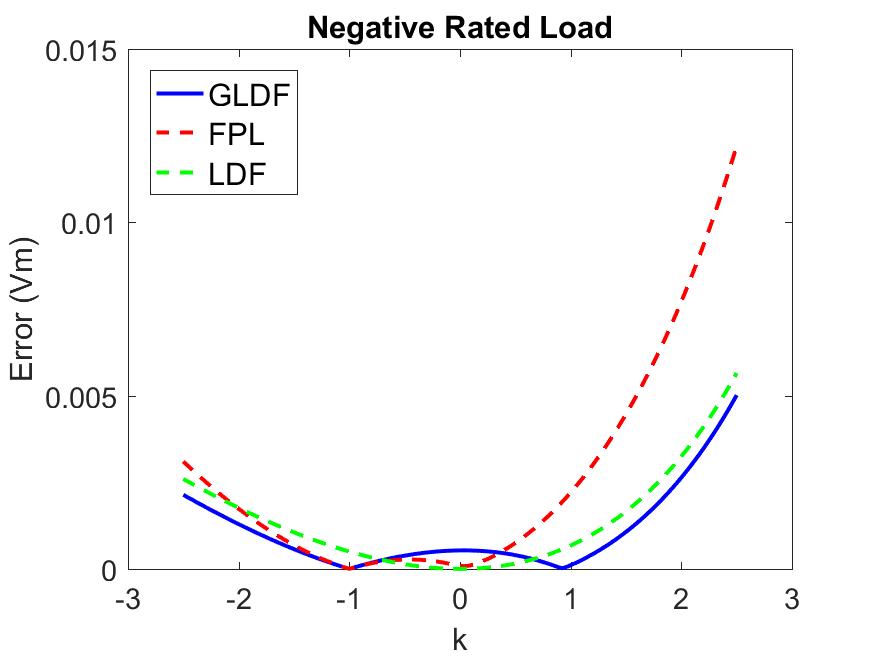} 
\caption{Relative linearization errors of voltage magnitudes in IEEE 37-bus feeder when linearized at (left) zero load, (middle) positive rated load, and (right) negative rated load.}
\label{fig:IEEE37}
\end{figure*}

\begin{figure*}
\centering
\includegraphics[width=.32\textwidth]{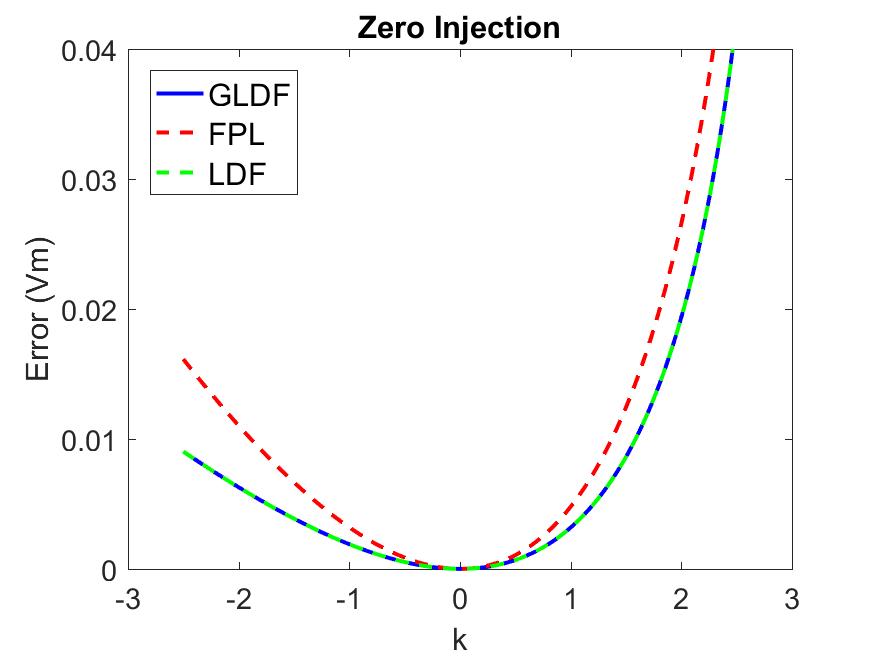}
\includegraphics[width=.32\textwidth]{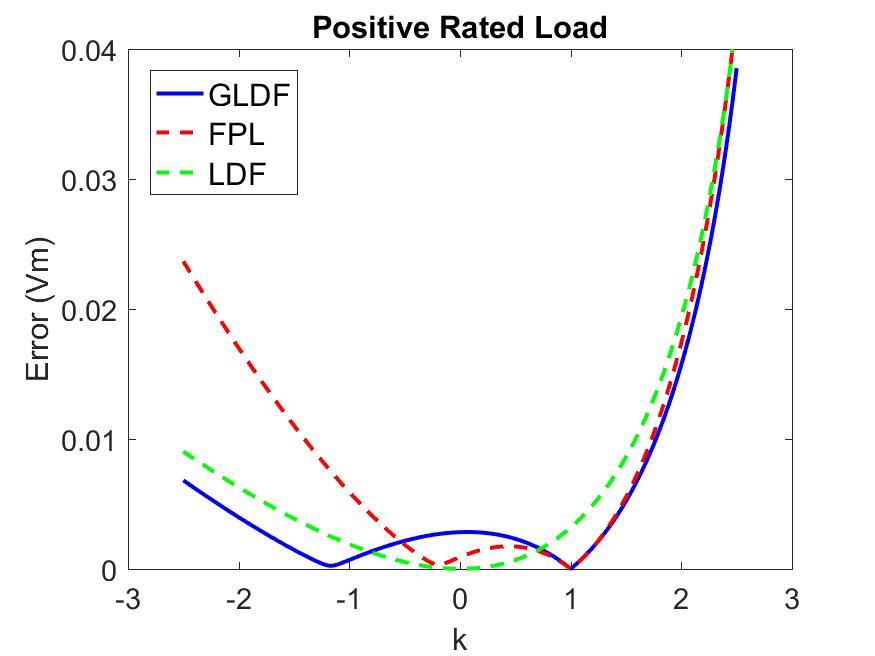}
\includegraphics[width=.32\textwidth]{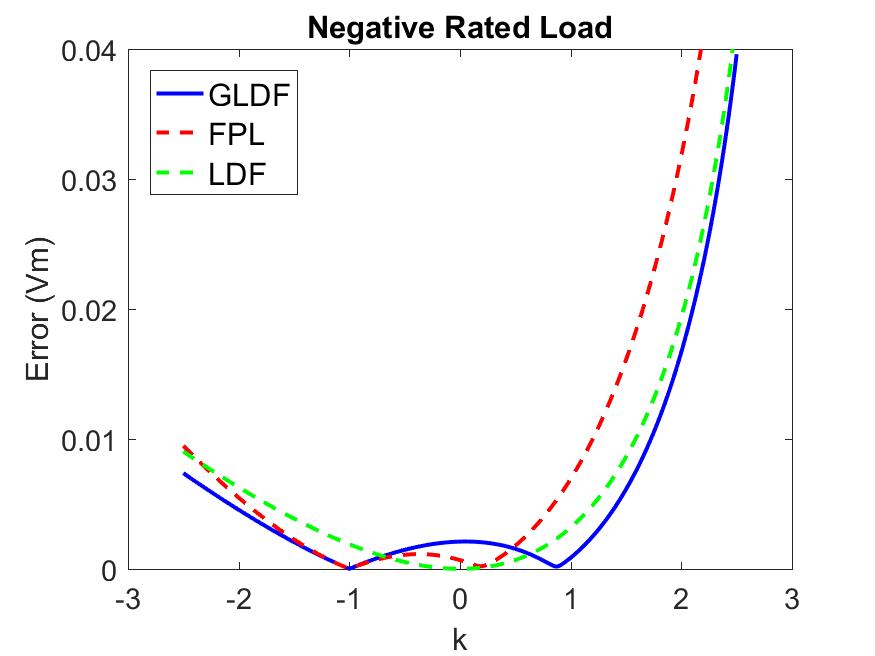} 
\caption{Relative linearization errors of voltage magnitudes in IEEE 123-bus feeder when linearized at (left) zero load, (middle) positive rated load, and (right) negative rated load.}
\label{fig:IEEE123}
\end{figure*}

\begin{table}[htbp]
\begin{center}
\caption{Approximation Errors of voltage magnitudes for positive load. The unit is 0.01 p.u.}
\begin{tabular}{c c c c c c c}
\dtoprule
\multicolumn{1}{c}{\textbf{IEEE}} 
& \multicolumn{2}{c}{\textbf{LDF}}
& \multicolumn{2}{c}{\textbf{GLDF}} 
& \multicolumn{2}{c}{\textbf{FPL}} \\
\cline{2-7} 
\textbf{Feeder} & \text{Mean}& \text{Max}& \text{Mean}& \text{Max}& \text{Mean}& \text{Max} \\
\hline
13  & $0.103$  & $1.62$    & $0.0855$  & $1.48$ & $0.0941$ & $2.35$ \\
37  & $0.0356$  & $0.295$  & $0.0168$  & $0.214$ & $0.0299$ & $0.423$ \\
123 & $0.143$  & $0.938$  & $0.0644$  & $0.704$ & $0.103$ & $1.14$ \\
\dbottomrule
\end{tabular}
\label{RatedLoad}
\end{center}

\begin{center}
\caption{Approximation Errors of voltage magnitudes for feeders with the penetration of DER. The unit is 0.01 p.u.}
\begin{tabular}{c c c c c c c}
\dtoprule
\multicolumn{1}{c}{\textbf{IEEE}} 
& \multicolumn{2}{c}{\textbf{LDF}}
& \multicolumn{2}{c}{\textbf{GLDF}} 
& \multicolumn{2}{c}{\textbf{FPL}} \\
\cline{2-7} 
\textbf{Feeder} & \text{Mean}& \text{Max}& \text{Mean}& \text{Max}& \text{Mean}& \text{Max} \\
\hline       
13  & $0.0932$  & $1.32$   & $0.0755$  & $1.15$ & $0.0799$ & $1.93$ \\
37  & $0.013$  & $0.120$  & $0.00839$  & $0.106$ & $0.0105$ & $0.205$ \\
123 & $0.0397$  & $0.333$  & $0.0236$  & $0.263$ & $0.0326$ & $0.490$ \\
\dbottomrule
\end{tabular}
\label{Negative RatedLoad}
\end{center}

\end{table}

\section{Conclusion and Future Work}\label{sec: conclusion}
In this paper, we propose a novel lineaerized power flow model for distribution system. The proposed model can be applied to a multiphase power grid with generic topology, like radial or meshed networks, under balance or unbalance three phase voltage. We show that the proposed model delivers more accurate voltage than FPL method when they are linearized at zero injection point. Numerical experiments show that for other linearization points, the proposed model is more robust to load variation than FPL. Comparing with FPL, the approximation errors of the proposed model slowly increase even when the exact operating point is moving far away from the linearization point.  

In the future, we will analyze the approximation errors of the proposed model theoretically when it is linearized at other operating points, like rated load, to further improve the accuracy and robustness of the model. We will also apply the proposed model to power system operation problem, like real-time power flow, OPF and SE problems to see whether we can obtain better solution with the proposed model.

\bibliographystyle{IEEEtran}
\bibliography{GLDF.bib}

\end{document}